\documentclass{article}
\usepackage{graphicx} 
\usepackage[authoryear,longnamesfirst]{natbib}
\usepackage{wrapfig}
\usepackage{url}
\usepackage{amsthm,mathtools,bm}
\usepackage{natbib}
\usepackage{algorithm}
\usepackage{algpseudocode}
\usepackage{placeins}
\usepackage{booktabs}
\usepackage{multirow}
\usepackage{amsmath}

\usepackage{amssymb}
\usepackage{hyperref}
\usepackage[utf8]{inputenc}
\usepackage[T1]{fontenc}
\usepackage{tikz-cd}
\usepackage{graphicx} % in preamble
\usepackage{subcaption}
\newtheorem{assumption}{Assumption}
\usepackage[dvipsnames]{xcolor}
\tikzset{smalltext/.style={"\textup{\small #1}" description}}
\tikzstyle{startstop} = [rectangle, rounded corners, minimum width=2.5cm, minimum height=0.8cm, text centered, draw=black, fill=white!30]
\tikzstyle{process} = [rectangle, minimum width=3cm, minimum height=1.5cm, text centered, draw=black, fill=blue!30]
\tikzstyle{arrow} = [thick,->,>=stealth]
\tikzstyle{data} = [rectangle, minimum width=2.8cm, minimum height=0.8cm, text centered, draw=black, fill=green!30]

\newtheorem{problem}{Problem}%[section]
\newtheorem{theorem}{Theorem}
\newtheorem{rmk}{Remark}
\title{Non-intrusive nonlinear reduced-order modeling with variable projection}

\author{Dimitrios Xylogiannis\textsuperscript{\dag}\thanks{Email: dimitrios.xylogiannis@onera.fr} \and Charles Poussot-Vassal\textsuperscript{\dag}\thanks{Email: charles.poussot-vassal@onera.fr} \and Claire Sarrat\textsuperscript{\dag}\thanks{Email: claire.sarrat@onera.fr} \\[1.5ex] \textsuperscript{\dag}ONERA - The French Aerospace Lab, Toulouse, France }

\date{} 

\begin{document}

\maketitle

% Here goes the abstract
\begin{abstract}
This work presents a method for constructing nonlinear reduced-order models from input-output time-domain data. The proposed approach, termed Mixed Interpolatory Inference with Variable Projection (MIIvp), exploits the fact that the considered class of nonlinear state-space models is linear in the output equation parameters. By applying the Variable Projection (VarPro) algorithm, the optimization is restricted to the state equation parameters alone, while the output equation parameters are recovered via linear least squares. As a consequence, the output dimension does not enter the nonlinear optimization parameter vector, making the method well suited for systems with very high-dimensional outputs, a setting where many other approaches become computationally  prohibitive. Under mild assumptions, it is shown 
that MIIvp can recover the true model parameters up to similarity. The method is first validated on a synthetic bilinear system, where it achieves machine-precision accuracy and recovers the true eigenvalues. MIIvp is then compared with existing methods on two experimental benchmarks from the nonlinear system identification literature. These numerical experiments showcase both the validity and the  limitations of the proposed approach. Finally, directions for improvements and future work are outlined.
%\nocite{*}%% Remove this line from your manuscript.
\end{abstract}
\vspace{0.5em}

\noindent\textbf{Keywords:} Reduced-order modeling; non-intrusive modeling; variable projection; nonlinear models

\vspace{0.5em}
\hrule
\vspace{1em}
\section{Introduction}\label{sec: Intro}

\subsection{Foreword}

Advances across the sciences increasingly hinge on our ability to predict, interpret, and control complex systems from limited information. In many areas of physics, chemistry, biology, and engineering, the governing laws are known in principle, yet their practical use is constrained by the cost of repeatedly running high-fidelity simulations or conducting experiments. As a result, there is a growing demand for models that can deliver fast, reliable predictions and support tasks such as forecasting, uncertainty quantification, and real-time decision-making.

Within this context, non-intrusive data-driven modeling has emerged as a particularly attractive domain; see e.g. \citet{benner2021ROM_vol1}. Rather than requiring direct access to governing equations or modifications to existing solvers, non-intrusive approaches learn predictive models by leveraging experimental measurements. This makes them broadly deployable across disciplines and compatible with established scientific workflows. At the same time, these models should not just be able to reconstruct the training data but ideally they should also generalize across different operating regimes, respect physical structure (e.g., stability, conservation of energy, symmetries), and remain interpretable enough to enable insight into the physical process under study.

A main challenge arises from the fact that many systems are only accessible through indirect measurements and the internal state that drives dynamics is either partially or not observed at all. In many cases, even if the state trajectories are measured, it might become  infeasible to store them in practice (e.g., in high-dimensional systems such as fluid flows, electromagnetic, or climate models).  In such cases, the central objective is to build models only from the available inputs and observable outputs that can explain the internal (unobserved) dynamics.  Learning models in a robust, physically consistent, and computationally efficient way is a key goal of non-intrusive modeling. These models should be of reduced complexity in order to enable simulation and analysis within seconds.

However, a further challenge arises when the output measurements are high-dimensional. This can occur in many scientific fields  where measurements are available over both space and time. For instance, in fluid and transport problems the input may be boundary conditions or forcing, while the output is a predicted spatiotemporal field such as 
velocity, pressure, or concentration. In cases like this, the internal state (the complete configuration evolving over time) is never measured in full and only sampled snapshots or sensor-derived reconstructions are available. 
Similarly, in structural dynamics the input may be an excitation waveform, while the output is a dense vibration response measured across many sensors or as a full-field displacement map, even though internal stress and damage states remain hidden. In imaging-based experiments, the output may be a high-resolution image or snapshot whose dimension grows with the number of pixels, while the underlying physical state that produces the image remains 
unobserved. Many more applications may be found in fields such as chemistry, biology, chemical and mechanical engineering.

\subsection{Problem statement}
\label{sec:Problem description}

Consider the following general nonlinear structure of 
discrete-time models with initial condition $x_0$:
\begin{equation}\label{struct: General_nonlinear}
    \left\{
	\begin{array} {rcl}
		x_{k+1}&=&Ax_{k}+Bu_{k}+E_{x}f(x_k,u_k)\\
		y_{k}&=&Cx_{k}+Du_{k}+E_{y}g(x_k,u_k)
	\end{array}
    \right. ,
\end{equation}
where $A \in \mathbb{R}^{n \times n}$, 
$C \in \mathbb{R}^{n_y \times n}$, 
$B \in \mathbb{R}^{n}$, and 
$D \in \mathbb{R}^{n_y}$ are constants. The  state, input and output variables are described by $x_{k} \in \mathbb{R}^n$, 
$u_{k} \in \mathbb{R}$ and 
$y_{k} \in \mathbb{R}^{n_y}$, respectively, for 
$k=0,1,\cdots, m-1$, where $m$ denotes the number of samples. Depending on the desired structure, 
$E_x \in \mathbb{R}^{n \times n_{f}}$, 
$E_y \in \mathbb{R}^{n_y \times n_g}$ and 
$f(x_k,u_k) \in \mathbb{R}^{n_f}$, 
$g(x_k,u_k) \in \mathbb{R}^{n_g}$ may differ depending on which terms they consider.  
In this work, the functions $f(x_k,u_k)$ and $g(x_k,u_k)$ are chosen so that a similarity transformation between equivalent realizations  exists. More specifically, we consider the following structures of \eqref{struct: General_nonlinear}:
\begin{itemize}
    \item[(L)] Linear: $E_x=E_y=0$.
    \item[(B)] Bilinear: $f(x_k,u_k)=g(x_k,u_k)= x_k u_k$, $E_x=N \in \mathbb{R}^{n \times n }$ and $E_y=F \in \mathbb{R}^{n_y \times n}$. 
    \item[(QB)] Quadratic-bilinear: $f(x_k,u_k)=g(x_k,u_k)=\begin{bmatrix} x_k u_k \\ x_k \otimes x_k \end{bmatrix}$,
$E_x=\begin{bmatrix} N & Q \end{bmatrix}$, $E_y=\begin{bmatrix} F & G \end{bmatrix}$, 

where $Q \in \mathbb{R}^{n \times n^2}$, 
$G \in \mathbb{R}^{n_y \times n^{2}}$ and $\otimes$ 
denotes the Kronecker product.
\end{itemize}

If the complexity of the model is high (i.e.\ $n$ is large), Model Order Reduction (MOR) provides appropriate methods to derive a simpler model of order $r \ll n$. If the matrices are known, MOR methods are referred to as  \emph{intrusive}. Conversely, when only data are available,  methods are referred to as \emph{non-intrusive} (see details in \cite{antoulas2005approximation} and \cite{AntoulasBook:2020})\footnote{Notice that the terms non-intrusive and identification methods are closely related. The difference mostly comes from the associated research community: the non-intrusive term comes from the model reduction and scientific computation communities, while system identification come from the dynamical systems community; both adopt a different methodology but share a similar objective; \cite{antoulas2005approximation} and \cite{Schoukens:2016}.}. 

Our main goal is to construct models of the form \eqref{struct: General_nonlinear} based \emph{solely} on time-domain input-output data  $\{u_k, y_k\}_{k=0}^{m-1}$. %The reduced order modeling and system  identification communities have considered many methods to address this challenge. 
The central problem addressed in this work can be stated as follows:
\begin{problem}\label{problem}
Given input-output data $\{u_k, y_k\}_{k=0}^{m-1}$ with $u_k \in \mathbb{R}$ and $y_k \in \mathbb{R}^{n_y}$, with $n_y$ large, construct a nonlinear dynamical model of the form \eqref{struct: General_nonlinear} with state dimension $r \ll n$, that captures the input-output behavior represented by the data.

\end{problem}

\subsection{Related work}
\label{sec:biblo}
The literature on nonlinear reduced-order modeling and system identification is vast, spanning a wide range of tools and methodologies from different scientific fields.
 A commonly applied technique is to solve a nonlinear optimization problem between the measured output and the predicted one. In this case, the optimization variable gathers all the vectorized matrix parameters of the desired system structure.
%\begin{equation}{\label{nonlinlsq}}
 %   \min_{\theta} \sum_{k=0}^{m-1} 
  %  \|y_{k} - \hat{y}_{k}(\theta)\|^{2}_{2},
%\end{equation}
%where $\theta$ gathers all the vectorized matrix parameters. In particular, for \eqref{struct: General_nonlinear} it is defined as:

%\begin{equation}
 %   \theta\coloneqq\; 
%\bigl(\mathrm{vec}(A)^\top, 
%B^\top, \mathrm{vec}(E_x)^\top, \mathrm{vec}(C)^\top, D^\top,\mathrm{vec}(E_y)^\top  \bigr)^\top,
%\end{equation}
%where  $\mathrm{vec}$  denotes the vectorization operator. One may also add the initial condition $x_0$ of the model as an extra parameter to be optimized.

Since this optimization problem  is nonlinear, no analytical solution exists and one has to rely on nonlinear optimization. Two of the methods that consider this approach and have been successfully applied in many applications are  Polynomial Nonlinear State-Space (PNLSS) \citep{PADUART2010_pnlss} and the related   Nonlinear State-Space (NLSS) \citep{Marconato2014ImprovedIF}. These methods first rely on  the Best Linear Approximation (BLA) from \cite{Pintelon_Schoukens_book2012}. Then, one solves a nonlinear problem  by  initializing with this linear model and setting the nonlinear matrices as zero. However, one main issue is the large dimension of the optimized parameters. This may result in a substantial computational time to construct the model  and therefore makes the  method impractical. Remedies for this have been  suggested in \cite{Decuyper_decouple}. However, if the output dimension $n_y$ is very large, even for low order $n$, the number of the parameters grows fast.  Other successful nonlinear identification methods use neural networks  such as \cite{FORGIONE2021, Forgione2020dynoNet}.  Promising nonlinear methods that can deal with  high-dimensional output series are presented  by \cite{WAHLSTROM2015} and Subspace Encoder Approach (SUBNET)  \citep{Beintema21a, BEINTEMA2021_video, BEINTEMA2023} 
that is based on Artificial Neural Networks (ANNs). Authors suggest the Mixed Interpolatory Inference (MII)  in  \cite{MixedII} and \cite{MyCDC}, and an extra step to tackle the case of noisy data was added in \cite{MyECC}. Furthermore, in \cite{MyECC} a nonlinear optimization problem, as  PNLSS, was considered to further improve the data fit.  However, as already mentioned, this nonlinear optimization step cannot be efficiently applied in the case of a high-dimensional output. In this case, the dimension of the optimization problem might grow very large  and solving it would become practically infeasible. This work aims to replace the nonlinear optimization with the Variable Projection (VarPro) \citep{Golub1973}, in order to reduce the number of parameters.

VarPro has been previously considered for model 
discovery in related contexts. Relatively recently, VarPro was used in \cite{OptDMD2018} to modify the well-known method of Dynamic Mode Decomposition (DMD), that can be applied when state trajectories are available. In \cite{BRULS1999}, the authors used 
VarPro for linear identification and also for 
identification of a Wiener model. However, for the Wiener model, the nonlinear parameters were all the matrices of the linear part. In~\cite{Verdult2001}, the authors considered separable least squares optimization to build bilinear models, but due to their choice of parametrization, the output dimension was included during  optimization. VarPro was also considered in~\cite{GanMin2021} to construct radial-basis-function network-based autoregressive models with exogenous input (RBF-ARX) models.

\subsection{Contribution}
\label{sec:contribution}

This work aims at addressing Problem \ref{problem}, and particularly the issues described above by proposing a method that can construct nonlinear models \eqref{struct: General_nonlinear} even with a very large output dimension. This is achieved by  observing that although the models \eqref{struct: General_nonlinear} are nonlinear, they are \emph{linear in the parameters}.

Therefore, the nonlinear optimization problem over all parameters can be reformulated as a separable nonlinear least squares problem and solved efficiently with the VarPro  algorithm. This boils down to solving a nonlinear optimization problem that only considers the parameters of the state equation. Then, the parameters of the output equation can be obtained by solving a linear least squares problem. Hence, the number of nonlinear optimization variables is independent of the output dimension, making it feasible to use this method for large output dimensions.

The proposed method, termed Mixed Interpolatory  Inference with Variable Projection (MIIvp), extends the MII framework with a VarPro refinement step. The main contributions of this work are:

\begin{enumerate}
\item A non-intrusive method for constructing 
nonlinear models~\eqref{struct: General_nonlinear} 
from input-output data whose optimization problem dimension is independent of the output dimension~$n_y$, while remaining applicable to data generated by  large-scale dynamical systems.
\item A local convergence result 
showing that, under assumptions, MIIvp recovers the true model parameters up to similarity.

\end{enumerate}

The proposed framework was partly motivated by the authors' prior hands-on practice in atmospheric dispersion modeling, where the input is the flow rate of the pollutant and the output is the high-dimensional field of the plume concentration over space and time \citep{Sarrat2017airport,MixedII}. Two main challenges arise with the 
pollutant datasets: (\textit{i}) the high-dimensional output field and (\textit{ii}) the low number of samples available due to costly Large Eddy Simulations (LES). However, these results are beyond the scope of the present article and will be reported in a dedicated future study.

The remainder of this work is organized as follows: 
Section \ref{sec:MII} reviews the steps of MII. The 
proposed framework (MIIvp) is described in 
Section \ref{sec:MIIvp}. The performance of MIIvp is 
tested in Section \ref{sec: Numerics} on a simple academic model 
and two challenging nonlinear benchmarks, where comparison with other existing approaches is possible. Finally,  Section \ref{sec: Conclusion} gives the conclusions and suggests directions for future work.

\section{Mixed Interpolatory Inference (MII)} \label{sec:MII}

MII basically consists of three steps: the Pencil method \citep{IonitaPencil:2012}, the Loewner framework \citep{Loewner}  and an inference step \citep{GoseaDMD}; see \cite{MyCDC} for details.  For simplicity, we review each step for the Single-Input Single-Output (SISO) case (i.e. $n_u=n_y=1$) but each step can be easily extended for the Multi-Input Multi-Output (MIMO) case. 

\subsection{Pencil method}

The Pencil method  provides a realization for the linear case, as proposed in \cite{IonitaPencil:2012} and is closely related to the \cite{Ho_Kalman1966} method. Let us suppose that we have an input-output pair $\{u_{k},y_{k}\}$  for $k=0,1,\cdots,m-1$ with $m\in\mathbb{N}$. In order to calculate the Markov parameters denoted by $h \in \mathbb{R}^{m}$ we form  an input  matrix $\mathcal{U}$ as:
\begin{equation}
    \mathcal{U}=\begin{pmatrix}
    u_0 & \\
          u_{1} &  u_{0}\\
 \vdots &  \ddots  &  \ddots\\
  u_{m-1} & \cdots &u_1& u_{0}\\
\end{pmatrix}\in \mathbb{R}^{m \times m},
\end{equation}
and solve the following least squares problem 
\begin{equation}{\label{OLS_pencil}}
    \hat{h}:=\arg \min_h \frac{1}{2}\|\mathcal{U}h - y ^{\top}\|_2^2 .
\end{equation}
After obtaining the Markov parameters $h=(h_{0},\cdots,h_{m-1})^\top$, the Hankel matrix $\mathcal{H}$ can be formed as follows:
\begin{equation*}
\mathcal{H}=  \begin{pmatrix}
        h_{1} &  h_{2}  &  h_{3}& \cdots&  h_{p+1}\\
          h_{2} &  h_{3}  &  h_{4}& \cdots & h_{p+2}\\
 \vdots &  \vdots  &  \vdots&  \cdots& \vdots\\
  %h_{p-1} &  h_{p-2}  &  h_{p-3}& \cdots & h_{2p-1}\\
   h_{p} &  h_{p+1}  &  h_{p+2}& \cdots & h_{2p-1}    
    \end{pmatrix}.
\end{equation*}
 For simplicity, we assume that $m=2p$, where $p \in \mathbb{N}$. For zero initial conditions, a descriptor realization $\bigl($i.e. $Ex_{k+1}=Ax_{k}+Bu_{k}, y_{k}=Cx_{k}+Du_{k} \bigr)$ with $E \in \mathbb{R}^{p \times p}$ and $D \in \mathbb{R}$, is given by:
\begin{align*}
    &E=\mathcal{H}(1:p,1:p),\, A=\mathcal{H}(1:p,2:p+1),\\ 
    &B=\mathcal{H}(1:p,1),\,  C=\mathcal{H}(1,1:p),\, D=h_{0}.
\end{align*}

By construction, the above realization is not necessarily minimal. However, applying  a rank-revealing factorization and projection on the $[E,A]$ matrix yields a minimal realization. Compared to  Eigensystem Realization Algorithm (ERA) \citep{ERA1985}, which assumes a unit impulse input, the Pencil method can be applied for an arbitrary input and therefore can be viewed as a generalization of ERA.

\subsection{Model reduction step}

Depending on the data and the nature of the application, the intermediate linear model may be of large order. To have a model with reduced complexity $r$ $(r \ll p)$, model reduction techniques can be used. Among the various  techniques, we suggest the use of the Loewner framework (LF) by  \cite{Loewner}, due to its data-driven nature and flexibility.  The details of this method can also be found in the tutorial  \cite{AntoulasSurvey:2016}. After MOR, we end up with a linear reduced-order model  as follows:

\begin{equation}
\left\{
	\begin{array}{rcl}
        E^{r}x^{r}_{k+1}&=&A^{r}x^{r}_{k}+B^{r}u_{k}, \hspace{0.2cm}    \hspace{0.4cm} x^{r}_{0}=0,\\
		y^{r}_{k}&=&C^{r}x^{r}_{k}, % \hspace{0.2cm} 
	\end{array}
 \right.
\end{equation}
where  $E^{r},A^{r} \in \mathbb{R}^{r \times r}$, $(C^{r})^{\top}, B^{r} \in \mathbb{R}^{r}$, $x^{r}_{k} \in \mathbb{R}^r$, $u_{k} \in \mathbb{R}$ and  $ y^{r}_{k} \in \mathbb{R}$, where $r\ll p$. Notice that the $D$ term vanishes, and is incorporate into the $E^r$ matrix. 

Now  the reduced state $x^r$  will be combined with  the original input-output data to add the nonlinear term, as the next section shows.  Notice that for the first step of the method any linear system identification method may be used. 
The combination of the Pencil method and the Loewner framework is motivated by (i) our data-driven modeling objective in a realistic setting, and by (ii) the fact that they both scale with a large number of internal variables $n$. The Loewner framework relies solely on transfer-function evaluations at prescribed frequencies; therefore, the linear model identified by the Pencil method, which may be of large dimension, need not be explicitly stored. This leads to a substantial reduction in memory usage and computational burden in practical setups; thus addressing some "curse of dimensionality" issues \citep{AGPV25}.  Note that it is well-known  that neither the Pencil method nor the Loewner framework guarantees that the derived linear model will be stable. In this work, if the linear system is unstable, we apply a projection onto the $\mathcal{RH}_\infty$ space, as described in \cite{Enforcing_stability}.  

\subsection{Inference step}

This step is a time-domain method, as opposed to the Loewner framework which applies in the frequency-domain \citep{GoseaDMD}. 
After the model reduction step, the  state $x^{r}_k \in \mathbb{R}^{r}$ of the linear reduced order model  is collected. Then,  the original input-output data and the   state snapshots are sorted  as follows:

\begin{equation}
\begin{array}{rcl}
    U&:=&\begin{bmatrix}
        u_{0} &  u_{1} & \cdots &  u_{m-1}
    \end{bmatrix} \in \mathbb{R}^{1 \times m}\\
    Y&:=&\begin{bmatrix}
        y_{0} &  y_{1} & \cdots &  y_{m-1}
    \end{bmatrix} \in \mathbb{R}^{1\times m}\\
    X^{r}&:=&\begin{bmatrix}
        x^{r}_{0} &  x^{r}_{1} & \cdots &  x^{r}_{m-1}
    \end{bmatrix} \in \mathbb{R}^{r \times m}\\
    X^{r}_{s}&:=&\begin{bmatrix}
        x^{r}_{1} &  x^{r}_{2} & \cdots &  x^{r}_{m}
    \end{bmatrix} \in \mathbb{R}^{r \times m}.
\end{array}
\end{equation}

Define the matrix $U_{d}=diag(u_{0},u_{1},\cdots,u_{m-1})\in \mathbb{R}^{m \times m}$. For both the bilinear and quadratic bilinear structure of \eqref{struct: General_nonlinear}, one can solve the following  problem:
\begin{equation}{\label{lsq}}
    \min_{\hat{A},\hat{B},\hat{C},\hat{D},\hat{E}_x,\hat{E}_y}   \dfrac{1}{2}\Biggl\|\begin{bmatrix}
        X^{r}_{s}\\ Y
    \end{bmatrix} -\begin{bmatrix}
        \hat{A} &\hat{B}&\hat{E}_x\\ \hat{C} &\hat{D}&\hat{E}_y
    \end{bmatrix}\begin{bmatrix}
        X^{r}\\ U \\ Z
    \end{bmatrix}\Biggr\|^{2}_{F},
\end{equation}
where $\|.\|_{F}$ denotes the Frobenius norm and the matrix $Z \in \mathbb{R}^{z \times m}$ is defined as  $Z=X^{r}U_{d}$ for the bilinear structure of \eqref{struct: General_nonlinear} and $Z=\begin{bmatrix}
    X^{r}U_{d} \\ X^{r} \odot X^{r}
\end{bmatrix}$ for the quadratic-bilinear case of \eqref{struct: General_nonlinear}, where $\odot$ denotes the Khatri-Rao product. 
By defining the matrices:
\begin{equation}
\begin{array}{rcl}
\Gamma&:=&\begin{bmatrix}
        X^{r}_{s}\\ Y
    \end{bmatrix}\in \mathbb{R}^{(r+1) \times m},\\ \Xi&:=&\begin{bmatrix}
        \hat{A} &\hat{B}&\hat{E}_x\\ \hat{C} &\hat{D}&\hat{E}_y
    \end{bmatrix}\in \mathbb{R}^{(r+1) \times (r+z+1)},\\ 
   \Phi&:=&\begin{bmatrix}
        X^{r}\\ U \\Z
    \end{bmatrix} \in \mathbb{R}^{(r+z+1) \times m},
\end{array}
\end{equation}
the problem \eqref{lsq} can be written as:
\begin{equation}{\label{lsqw}}
    \min_{\Xi}  \dfrac{1}{2} \bigl\|\Gamma -\Xi\Phi\bigr\|^{2 }_{F},
\end{equation}
whose solution  is given by:
\begin{equation}
    \Xi=\Gamma \Phi^{\dag},
\end{equation}
where $\Phi^{\dag}$ denotes the Moore-Penrose inverse of matrix $\Phi$.

To avoid  ill-conditioning, Tikhonov regularization can be used. For a parameter $\mu >0$, the following problem is solved instead
\begin{equation}{\label{lsqcompact}}
    \min_{\Xi}  \dfrac{1}{2} \bigl\|\Gamma -\Xi\Phi\bigr\|^{2}_{F}+ \dfrac{\mu}{2} \bigl\|\Xi\bigr\|^{2}_{F},
\end{equation}
whose explicit solution is
\begin{equation}
    \Xi= \Gamma\Phi^{\top}(\Phi \Phi^{\top}+\mu I) ^{-1},
\end{equation}
where $I$ denotes the identity matrix. In this work, Tikhonov regularization is applied only to the bilinear structure, with the parameter $\mu$ selected by the L-curve criterion~(\cite{Hansen1992}). For the quadratic-bilinear case, because of the different scaling of the quadratic entries compared to the linear and bilinear terms,  separate regularization parameters may be required for the different blocks. This aspect, requiring deep insight, is left for future investigation.
 
 %The regularization parameter $\mu$ significantly affects the results and an appropriate selection strategy  is required. This work makes the choice by the L-curve criterion

\section{Mixed Interpolatory Inference with Variable Projection (MIIvp)} \label{sec:MIIvp}
This section presents the formulation of the problem for VarPro. The main idea is to restrict the nonlinear optimization to the parameters of the state equation alone.

\subsection{Variable Projection (VarPro)}

We consider in this section the general nonlinear form \eqref{struct: General_nonlinear}. The expression of the output equation may be rewritten in terms of the  parameters $\theta_n \in \mathbb{R}^{\tilde{n}}$ and $\theta_l    \in \mathbb{R}^{n_y \times n_{\phi}}$  defined as 
\begin{align}
\theta_n &\;\coloneqq\; 
\bigl(x_0^\top,\, \mathrm{vec}(A)^\top,\, 
B^\top,\, \mathrm{vec}(E_x)^\top\bigr)^\top 
\in \mathbb{R}^{\tilde{n}}, 
\label{eq:theta_n_def}\\
\theta_l &\;\coloneqq\; 
\begin{bmatrix} C & D & E_y \end{bmatrix} 
\in \mathbb{R}^{n_y \times n_{\phi}},
\label{eq:theta_l_def}
\end{align}
with $n_{\phi} \coloneqq n + 1 + n_g$. Moreover, define the vector 
\begin{equation}
    \label{eq: Phi}
\phi_k(\theta_n) \;\coloneqq\;
\begin{bmatrix}
x_k(\theta_n)\\[2pt]
u_k\\[2pt]
g(x_k(\theta_n),u_k)
\end{bmatrix}
\in \mathbb{R}^{n_{\phi}},
\end{equation}
and the stacked matrix
\begin{equation}
\label{eq:Psi_def}
\Psi(\theta_n) \;\coloneqq\;
\bigl[\phi_0(\theta_n)\ \phi_1(\theta_n)\ \cdots\ \phi_{m-1}(\theta_n)\bigr]
\in   \mathbb{R}^{n_{\phi}\times m}.
\end{equation}
Then, the cost function $J(\theta_n,\theta_l)$ can be defined as 
\begin{equation}
\label{eq:full_cost_vp}
J(\theta_n,\theta_l)\;\coloneqq\; \dfrac{1}{2}\bigl\|Y-\theta_l\Psi(\theta_n)\bigr\|_F^2.
\end{equation}

The above problem can be solved in two steps. 
In the first step, we seek $\hat{\theta}_n$ that minimizes the cost function
%\begin{equation}
%\label{eq:cost_vp_nonlin}
%\min_{\theta_n}\; \tilde{J}(\theta_n)\coloneqq\;\hat{\theta}^{*}_n,
%\end{equation)
\begin{equation}\label{eq:cost_vp_nonlin}
    \hat{\theta}_n \in \arg \min_{\theta_n}\; \tilde{J}(\theta_n)
\end{equation}
where 
\begin{equation}
\tilde{J}(\theta_n)\;\coloneqq\;  \dfrac{1}{2}\bigl\|Y-Y\Psi(\theta_n)^{\dagger}\Psi(\theta_n)\bigr\|_F^2.
\end{equation}

In the second step, for fixed $\theta_n =\hat{\theta}_n$,   $\theta_l$  can be computed from $Y\Psi(\hat{\theta}_n)^{\dagger}$. As before, for the linear least squares problem we may opt to add a Tikhonov regularization term.  To solve \eqref{eq:cost_vp_nonlin}, the parameter $\theta_n$ is initialized by the matrices obtained by MII and the stopping criteria can be set to be the relative change of the parameters. The  procedure of MIIvp is summarized in Algorithm~\ref{alg:MIIvp}. The optimization problem   \eqref{eq:cost_vp_nonlin} is nonlinear in the parameters and therefore a nonlinear optimization routine should be applied. Depending on the application and the available data, any nonlinear optimization solver can be used. Notice also that the gradients of the optimization problem can be computed analytically.  In this work, we choose the Limited-memory Broyden-Fletcher-Goldfarb-Shanno (L-BFGS) \citep{Liu1989}. This choice was inspired by the recent work of \cite{Bemporad2025}  to keep the framework computationally efficient. The L-BFGS algorithm estimates the inverse Hessian matrix of the optimization problem and stores only a few vectors that represent the approximation implicitly. Further considerations on the computational costs for every step of the  framework are provided in Section \ref{sec: Computational costs}.
\begin{algorithm}[h]
\caption{Mixed Interpolatory Inference with Variable Projection (MIIvp)}\label{alg:MIIvp}
\begin{algorithmic}[1]
\Require  Input-output data $\{u_k, y_k\}_{k=0}^{m-1}$  in the time-domain;  a chosen  structure of \eqref{struct: General_nonlinear}.
\Statex
\Statex \textbf{Step 1: MII}
\State  Pencil method to obtain a linear model of order $p$.
\State Apply the Loewner framework  to reduce the dimension to $r \ll p$.
\State Nonlinear inference step.
\Statex
\Statex \textbf{Step 2: VarPro}
\State Initialize with the MII model from Step 1.
\While{stopping criteria are not satisfied}
    \State Update $\theta_n$ by solving the nonlinear optimization problem \eqref{eq:cost_vp_nonlin}.
    \State Obtain $\theta_l$ by $\theta_l = Y \Psi(\theta_n)^{\dagger}$.
\EndWhile
\Statex
\State \textbf{Output:} Nonlinear reduced-order model of the form \eqref{struct: General_nonlinear}.% Add your outputs here
\end{algorithmic}
\end{algorithm}

\subsection{Local convergence of the method} \label{sec:Converegence}

Based on the original paper of  \cite{Golub1973}, we aim to show that under certain assumptions the method can converge to the true parameters.

\begin{theorem}{\cite[Theorem 2.1]{Golub1973}}

Assume that $\Psi(\theta_{n})$ has a constant rank over an open set $ \Omega \subset \mathbb{R}^{\tilde{n}}$. Then

\begin{enumerate}
    \item If $\hat{\theta}_{n}   \in  \Omega$ is a minimizer of \eqref{eq:cost_vp_nonlin} and $\hat{\theta}_l=Y\Psi(\hat{\theta}_n)^{\dagger}$, then $( \hat{\theta}_{n}, \hat{\theta}_l )$  is also a minimizer of \eqref{eq:full_cost_vp}.
    \item   If $( \hat{\theta}_{n}, \hat{\theta}_l )$ is a minimizer of \eqref{eq:full_cost_vp} for $\theta_n \in  \Omega$, then $\hat{\theta}_n$ is a minimizer of \eqref{eq:cost_vp_nonlin} in $\Omega$  and $\tilde{J}(\hat{\theta}_n)=J(\hat{\theta}_n,\hat{\theta}_l)$. Furthermore, if there is a unique $\theta_l$ among $J(\theta_n,\theta_l)$, then $\theta_l$  must satisfy $\theta_l=Y\Psi(\theta_n)^{\dagger}$.
\end{enumerate}
\label{theo:Golub_theorem}
\end{theorem}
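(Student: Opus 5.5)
The argument is elementary and rests on one observation: for fixed $\theta_n$, the map $\theta_l \mapsto J(\theta_n,\theta_l)$ is a convex quadratic, namely a linear least squares problem in the rows of $\theta_l$. Its minimal value is attained at $\theta_l = Y\Psi(\theta_n)^{\dagger}$. Writing $P(\theta_n) := \Psi(\theta_n)^{\dagger}\Psi(\theta_n)$ for the orthogonal projector onto the row space of $\Psi(\theta_n)$, we have for every $\theta_l$
\begin{equation*}
\|Y-\theta_l\Psi\|_F^2 = \|Y(I-P)\|_F^2 + \|YP-\theta_l\Psi\|_F^2 .
\end{equation*}
This Pythagorean identity holds because $Y(I-P)$ is orthogonal, in the Frobenius inner product, to every matrix whose rows lie in the row space of $\Psi$. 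It gives $\min_{\theta_l} J(\theta_n,\theta_l) = \tilde{J}(\theta_n)$, and the minimum is attained exactly when $\theta_l\Psi = YP$. Since $Y\Psi^{\dagger}\Psi = YP$, the choice $\theta_l = Y\Psi^{\dagger}$ attains it. I will establish this identity first, pointwise in $\theta_n$.

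\textbf{Part 1.} Let $\hat\theta_n\in\Omega$ minimize $\tilde J$ (over $\Omega$, or locally), and set $\hat\theta_l = Y\Psi(\hat\theta_n)^{\dagger}$. For any $(\theta_n,\theta_l)$ with $\theta_n \in \Omega$,
\begin{equation*}
J(\theta_n,\theta_l)\ge \tilde J(\theta_n)\ge \tilde J(\hat\theta_n)=J(\hat\theta_n,\hat\theta_l).
\end{equation*}
So the pair is a minimizer. If only local minimality is meant, the same chain restricted to a neighbourhood $V\times W$ gives a local minimizer, since the first inequality holds for all $\theta_l$.

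\textbf{Part 2.} Suppose $(\hat\theta_n,\hat\theta_l)$ minimizes $J$ over $\Omega\times\mathbb{R}^{n_y\times n_\phi}$. For any $\theta_n\in\Omega$,
\begin{equation*}
\tilde J(\theta_n)=J(\theta_n,Y\Psi(\theta_n)^\dagger)\ge J(\hat\theta_n,\hat\theta_l)\ge \tilde J(\hat\theta_n).
\end{equation*}
Hence $\hat\theta_n$ minimizes $\tilde J$. Taking $\theta_n=\hat\theta_n$ in this chain forces equality, so $\tilde J(\hat\theta_n)=J(\hat\theta_n,\hat\theta_l)$. For the uniqueness clause, the set of minimizers in $\theta_l$ is the affine set $\{\theta_l : \theta_l\Psi = YP\}$. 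If it is a singleton, then that singleton must be $Y\Psi^\dagger$, since $Y\Psi^\dagger$ belongs to the set.

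\textbf{Main obstacle: the role of the constant-rank hypothesis.} The global statements above do not actually require it. It matters for the local version of the theorem, which is how Golub--Pereyra state it. It also matters for any differentiation of $\tilde J$. On a set where the rank of $\Psi(\theta_n)$ is constant, $\theta_n\mapsto\Psi(\theta_n)^\dagger$ is continuous, and indeed differentiable, because $\Psi$ is smooth in $\theta_n$: the states are polynomial in $\theta_n$ through the recursion. Without constant rank, the pseudoinverse can jump. Then a local minimizer of $\tilde J$ need not correspond to a local minimizer of $J$, because nearby $\theta_n$ could have a lower-rank $\Psi$ with different projector behaviour.

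I would therefore handle the local case in two steps. First, I would invoke the standard fact that the Moore--Penrose inverse is continuous on constant-rank sets. Second, I would use it to conclude that $\tilde J$ is continuous on $\Omega$, which ensures that local minimality transfers between the two formulations through the inequalities above, applied on small neighbourhoods inside $\Omega$.
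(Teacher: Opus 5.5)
The paper does not prove this statement; it quotes it from Golub--Pereyra, so there is no in-paper argument to compare against. Your proof is correct for the statement as written, which concerns minimizers over $\Omega$. It is the natural argument. The orthogonal decomposition gives $J(\theta_n,\cdot)\ge\tilde J(\theta_n)$, with equality exactly when $\theta_l\Psi(\theta_n)=Y\Psi(\theta_n)^{\dagger}\Psi(\theta_n)$. Both parts then follow from your two inequality chains, and this is the reasoning behind the minimizer clauses of the original theorem.

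You are also right that constant rank is never used here. In the original theorem that hypothesis serves a stationary-point version of part 1, which this restatement drops. There one shows that $\nabla\tilde J(\hat\theta_n)=0$ implies $\nabla J(\hat\theta_n,\hat\theta_l)=0$, which requires differentiating $\theta_n\mapsto\Psi(\theta_n)^{\dagger}$ and hence locally constant rank.

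Your uniqueness argument takes the hypothesis to mean a unique inner minimizer at $\hat\theta_n$. The original hypothesis is a unique $\theta_l$ among all minimizing pairs, and that implies yours, since every inner minimizer at $\hat\theta_n$ forms a minimizing pair with $\hat\theta_n$. So your argument covers it.

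Two corrections to your side remarks; neither affects the proof.

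\textbf{The direction is reversed.} Your own Part 1 chain shows that a local minimizer of $\tilde J$ always gives a local minimizer of $J$, with no rank condition; it is the converse that can fail. For example, take $\Psi(\theta_n)=\theta_n^2$ and $Y=1$. The point $(0,-1)$ is a local minimizer of $J=\tfrac12(1-\theta_l\theta_n^2)^2$, yet $\theta_n=0$ is a strict local maximizer of $\tilde J$.

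\textbf{Continuity of $\tilde J$ is not what rescues the converse under constant rank.} What you need is a continuous selection of inner minimizers passing through $\hat\theta_l$. One choice is $\theta_l(\theta_n)=Y\Psi(\theta_n)^{\dagger}+\hat\theta_l\bigl(I-\Psi(\theta_n)\Psi(\theta_n)^{\dagger}\bigr)$. It satisfies $\theta_l(\theta_n)\Psi(\theta_n)=Y\Psi(\theta_n)^{\dagger}\Psi(\theta_n)$, so it minimizes $J(\theta_n,\cdot)$. It equals $\hat\theta_l$ at $\hat\theta_n$, because $\hat\theta_l$ minimizes the convex slice $J(\hat\theta_n,\cdot)$. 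It is continuous when the rank is constant. Then $\tilde J(\theta_n)=J(\theta_n,\theta_l(\theta_n))\ge J(\hat\theta_n,\hat\theta_l)=\tilde J(\hat\theta_n)$ for $\theta_n$ near $\hat\theta_n$.
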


Let us now consider the following Assumptions \ref{ass:A1}, \ref{ass:A2}, \ref{ass:A3} and \ref{ass:A4}.
\begin{assumption}
\label{ass:A1}
The input $u$ is persistently exciting and, for every $\theta_n \in \Omega$ the 
regressor matrix 
$\Psi(\theta_n) \in \mathbb{R}^{n_\phi \times m}$ 
has full row rank,
\begin{equation}
\label{eq:full_row_rank}
\mathrm{rank}\,\Psi(\theta_n) = n_\phi 
\qquad \text{with } n_\phi \le m.
\end{equation}
\end{assumption}

\begin{assumption}
\label{ass:A2}
For every $\theta_n \in \Omega$ the trajectory 
$\{x_k(\theta_n)\}_{k=0}^{m-1}$ is well-defined.
\end{assumption}

\begin{assumption}
\label{ass:A3}
If $\theta_n \in \Omega$ and there exists $\theta_l$ 
such that
\begin{equation}
\label{eq:zero_residual_representation}
Y = \theta_l \Psi(\theta_n),
\end{equation}
then there exists a similarity transformation mapping 
the true model $(\theta_n^\star, \theta_l^\star)$ to 
$(\theta_n, \theta_l)$. We write this as 
$(\theta_n, \theta_l) \sim 
(\theta_n^\star, \theta_l^\star)$.
\end{assumption}

\begin{assumption}
\label{ass:A4}
There exists an open neighborhood $\mathcal N\subset\Omega$, with $\theta_n^\star\in\mathcal N$, such that, when initialized at  $\theta_n^{(0)}\in\mathcal N$, the nonlinear solver generates a sequence $\{\theta_n^{(j)}\}_{j\geq 0}\subset\mathcal N$ satisfying
\[
\theta_n^{(j)}\to \hat{\theta}_n
\quad \text{as } j\to\infty,
\]
where
\[
\hat{\theta}_n
\in
\arg \min_{\theta_n\in\mathcal N}
\tilde J(\theta_n).
\]

\end{assumption}

Based on Theorem \ref{theo:Golub_theorem} and the assumptions above, for input-output data derived from a model of the form \eqref{struct: General_nonlinear}, the following theorem holds.

\begin{theorem}
\label{thm:MII_vp convergence}
Assume the input-output data are noiseless and are produced from a model of the form \eqref{struct: General_nonlinear}. Denote the nonlinear and its linear blocks as  $(\theta_n^\star,\theta_l^\star)$ such that
\begin{equation}
\label{eq:noiseless_fit}
Y=\theta_l^\star \Psi(\theta_n^\star).
\end{equation}
Let $\Omega \subset \mathbb{R}^{\tilde n}$ be an open set with  $\theta_n^\star \in \Omega$. Then, for $\hat\theta_n$  and the  minimizer     $\hat\theta_l \;=\; Y\Psi(\hat\theta_n)^\dagger$ of the least squares problem, it holds that $(\hat\theta_n,\hat\theta_l)\sim(\theta_n^\star,\theta_l^\star)$.
\end{theorem}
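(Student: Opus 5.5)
The strategy is to show that the true parameters give zero cost, so any minimizer $\hat\theta_n$ produced under Assumption~\ref{ass:A4} must also give zero cost. Assumption~\ref{ass:A3} then turns zero residual into similarity. Throughout, $\hat\theta_n$ is read as the limit point of Assumption~\ref{ass:A4}, a minimizer of $\tilde J$ over $\mathcal N\subset\Omega$ with $\theta_n^\star\in\mathcal N$.

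\textbf{Step 1: the true parameters give zero cost.} By Assumption~\ref{ass:A2}, $\Psi(\theta_n)$ is well defined on $\Omega$. By Assumption~\ref{ass:A1}, it has constant rank $n_\phi$ there, so Theorem~\ref{theo:Golub_theorem} applies on $\Omega$. Full row rank gives $\Psi(\theta_n)\Psi(\theta_n)^\dagger=I_{n_\phi}$. Using \eqref{eq:noiseless_fit},
\begin{equation*}
Y\Psi(\theta_n^\star)^\dagger\Psi(\theta_n^\star)=\theta_l^\star\Psi(\theta_n^\star)\Psi(\theta_n^\star)^\dagger\Psi(\theta_n^\star)=\theta_l^\star\Psi(\theta_n^\star)=Y .
\end{equation*}
Hence $\tilde J(\theta_n^\star)=0$. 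Since $\tilde J\ge 0$, the minimum value of $\tilde J$ over $\mathcal N$ is zero.

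\textbf{Step 2: the computed minimizer also gives zero cost.} Because $\theta_n^\star\in\mathcal N$, we get $\tilde J(\hat\theta_n)\le\tilde J(\theta_n^\star)=0$, so $\tilde J(\hat\theta_n)=0$. By part 1 of Theorem~\ref{theo:Golub_theorem}, the pair $(\hat\theta_n,\hat\theta_l)$ with $\hat\theta_l=Y\Psi(\hat\theta_n)^\dagger$ minimizes $J$, and $J(\hat\theta_n,\hat\theta_l)=\tilde J(\hat\theta_n)=0$. It is enough to note that $J(\hat\theta_n,Y\Psi(\hat\theta_n)^\dagger)$ is literally $\tilde J(\hat\theta_n)$. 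This gives the exact representation
\begin{equation*}
Y=\hat\theta_l\Psi(\hat\theta_n).
\end{equation*}
Full row rank also makes this $\hat\theta_l$ the unique solution of the linear problem, which matches the uniqueness clause in part 2 of Theorem~\ref{theo:Golub_theorem}.

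\textbf{Step 3: conclude similarity.} Since $\hat\theta_n\in\mathcal N\subset\Omega$ and \eqref{eq:zero_residual_representation} holds with $\theta_l=\hat\theta_l$, Assumption~\ref{ass:A3} yields $(\hat\theta_n,\hat\theta_l)\sim(\theta_n^\star,\theta_l^\star)$.

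\textbf{Main obstacle.} The delicate point is the meaning of $\hat\theta_n$. It must be a global minimizer over a set that contains $\theta_n^\star$, which is exactly what Assumption~\ref{ass:A4} provides. A merely local minimizer of $\tilde J$ could have positive cost, and then Step 2 would fail. I would state this explicitly in the proof, and note that the limit of the solver iterates lies in $\mathcal N$, which is what Assumption~\ref{ass:A4} asserts. The remaining steps are direct substitutions.
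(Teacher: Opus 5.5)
Your proposal is correct and follows essentially the same route as the paper's proof. Both arguments use Assumption~\ref{ass:A4} to get $\tilde J(\hat\theta_n)\le\tilde J(\theta_n^\star)=0$, identify $J(\hat\theta_n,\hat\theta_l)$ with $\tilde J(\hat\theta_n)$ via Theorem~\ref{theo:Golub_theorem} (with Assumption~\ref{ass:A1} giving uniqueness of $\hat\theta_l$), and conclude from the exact fit $Y=\hat\theta_l\Psi(\hat\theta_n)$ and Assumption~\ref{ass:A3}. Your explicit check that $\tilde J(\theta_n^\star)=0$ is a small but welcome addition, since the paper takes it for granted; it actually needs only the identity $\Psi\Psi^\dagger\Psi=\Psi$, not full row rank.
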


\begin{proof}
    By Assumption \ref{ass:A2}, the regressor matrix $\Psi(\theta_n)$ and the cost functions $J(\theta_n,\theta_l)$, $\tilde{J}(\theta_n)$  are well-defined on $\Omega$.

    Let $\mathcal N  \subset \Omega$ with $\theta^{\star}_n \in \mathcal N$.
    For the solution of \eqref{eq:cost_vp_nonlin} $\hat{\theta}_n \in \mathcal N $ with $\hat\theta_n\in\arg\min_{\theta_n\in\mathcal N}\tilde J(\theta_n)$, from Assumption \ref{ass:A4} it holds that $\tilde J(\hat\theta_n)\;\le\;\tilde J(\theta_n^\star)=0,$ and since $\tilde J(\hat\theta_n)\ge 0$, we conclude that 
$\tilde J(\hat\theta_n)=0$.

    By Assumption \ref{ass:A1}, $\Psi(\theta_n)$ has full row rank on $\Omega$. Therefore, by Theorem \ref{theo:Golub_theorem}, for fixed $\hat{\theta}_n$, $\hat{\theta_l}= Y \Psi(\hat{\theta}_n)^{\dagger}$ is a least squares minimizer, which is also unique by Assumption \ref{ass:A1}, and by Theorem \ref{theo:Golub_theorem} it satisfies 
$J(\hat{\theta}_n,\hat{\theta}_l)=\tilde{J}(\hat{\theta}_n)$.

Since  $\tilde{J}(\hat{\theta}_n)=0$, this yields $J(\hat{\theta}_n,\hat{\theta}_l)=0$
and therefore, $Y=\hat{\theta}_l\Psi(\hat{\theta}_n)$. By Assumption \ref{ass:A3}, the above yields   $(\hat{\theta}_n,\hat{\theta}_l)\sim (\theta_n^\star,\theta_l^\star)$. This completes the proof.

\end{proof}

\begin{rmk}
Assumption \ref{ass:A4} is a local convergence 
requirement. It asserts the existence of a basin of attraction  within which the nonlinear solver converges to a minimizer of $\tilde{J}$. In practice, the cost function$\tilde J(\theta_n)$ is nonconvex. Consequently, nonlinear solvers  typically only guarantee to converge to a stationary point and may converge to local minima or saddle points, depending on the initialization and numerical conditioning. Moreover, finite-precision effects and imperfect satisfaction of persistent excitation (rank loss or near-rank-deficiency of $\Psi(\theta_n)$) may prevent attaining $\tilde J(\hat\theta_n)=0$ even when the structural model class is correct. In such cases, the conclusion of Theorem \ref{thm:MII_vp convergence} should be interpreted as a local recovery statement. When the optimization is initialized sufficiently close  and the problem is well-conditioned, the algorithm recovers the true parameters (up to similarity), but global recovery is not guaranteed.
\end{rmk}

 \subsection{Computational cost analysis} \label{sec: Computational costs}

This subsection presents the computational costs in terms of flops for the steps of MIIvp and other practical considerations.  MII basically consists of three SVDs in a row and the rest is simple matrix operations.  Consider the case of a single input and  output dimension $y_k \in \mathbb{R}^{n_y}$. The costs of each step are summarized in Table~\ref{tab:comp_costs}.

The Pencil method involves the solution of a linear least squares problem and the SVD of the Hankel matrix. The solution of the linear least squares~\eqref{OLS_pencil} costs 
$\mathcal{O}(m^3+m^{2}n_y)$ \citep{Golub_Matrix_computations}. 
The SVD of $\mathcal{H} \in \mathbb{R}^{pn_y \times p}$ with $p = m/2$ amounts to $\mathcal{O}(\frac{m^3}{8}\,n_y)$. It is also required to store the Hankel matrix in 
order to derive the linear matrices. If the number of samples and the output dimension are very high, one may use the CUR factorization instead of the SVD as in~\cite{CUR_ERA}, or a randomized SVD as in~\cite{R-svd}.

The Loewner framework, for a choice of $l$ frequency samples, requires $\mathcal{O}(\frac{n_y}{2}\,l^2)$ for the formation of the Loewner matrices and $\mathcal{O}(\frac{l^3}{8})$ for the SVD.

The inference step for the general nonlinear 
structure \eqref{struct: General_nonlinear} with dimension $r_f$ for the nonlinear functions $f(x_k, u_k)$ and $g(x_k, u_k)$ takes $\mathcal{O}\!\big(m(r{+}1{+}r_f)^2 + m(r{+}1{+}r_f)(r{+}n_y)\big)$ with $m \ge r + 1 + r_f$. Note that we also leverage the fact that the state dimension $r$ is low.

For the VarPro step, one  solves a nonlinear 
optimization of dimension $\tilde{n} = r^2 + 2r + r_f r$. For the bilinear case, this results in $\tilde{n} = 2(r^2 + r)$ and for the quadratic-bilinear case, $\tilde{n} = 2r^2 + 2r + r^3$. Crucially, $\tilde{n}$ is independent of the output dimension~$n_y$. Since L-BFGS can handle a large number of parameters, the bilinear case can consider relatively larger reduced orders (i.e.\ $r > 100$). 
The quadratic-bilinear case becomes more involved for larger orders of $r$ due to the cubic growth of $\tilde{n}$. More sophisticated numerical techniques that can speed up optimization are left as a topic of 
future studies. The second part of VarPro consists of solving a linear least squares similar to the inference step, but without the state equation, so it results in $\mathcal{O}\!\big(m(r{+}1{+}r_f)^2 + m(r{+}1{+}r_f)n_y\big)$.

\hspace{1cm}

\begin{table}[ht!]
\centering
\resizebox{\textwidth}{!}{%
\begin{tabular}{lll}
\toprule
\textbf{Step} & \textbf{Task} & \textbf{Flops} \\
\midrule
\multirow{2}{*}{Pencil}
  & Least squares \eqref{OLS_pencil} & $\mathcal{O}(m^3 + m^2 n_y)$ \\
  & SVD of $\mathcal{H}$ & $\mathcal{O}\!\left(\frac{m^3}{8}\, n_y\right)$ \\
\midrule
\multirow{2}{*}{LF}
  & Matrix assembly & $\mathcal{O}\!\left(\frac{n_y}{2}\, l^2\right)$ \\[4pt]
  & SVD of Loewner pencil & $\mathcal{O}\!\left(\frac{l^3}{8}\right)$ \\
\midrule
Inference
  & Least squares \eqref{lsqw}  & $\mathcal{O}\!\big(m(r{+}1{+}r_f)^2 + m(r{+}1{+}r_f)(r{+}n_y)\big)$ \\
\midrule
\multirow{2}{*}{VarPro}
  & Nonlinear optimization of dimension $\tilde{n}$ &  \\
%\midrule
%VarPro
  & Least squares & $\mathcal{O}\!\big(m(r{+}1{+}r_f)^2 + m(r{+}1{+}r_f)n_y\big)$ \\
\bottomrule
\end{tabular}}

\caption{Computational costs of each step of MIIvp for a single-input system with output dimension $n_y$ and $m$ samples, where $l$ is the number of frequency samples, $r$ is the reduced order, and $r_f$ is the dimension of $f(x_k, u_k)$. The flops for the nonlinear optimization step are omitted as they depend on the specific solver and details of each implementation. }
\label{tab:comp_costs}
\end{table}

\section{Numerical experiments}\label{sec: Numerics}

In this section, we present numerical experiments to evaluate the performance of  MIIvp. All computations are done on an Intel\textregistered \hspace{0.01cm}  Core\texttrademark i5-1345U   1.60 GHz, 16GB RAM,  with MATLAB\textregistered (2019b). For the implementation of MII, the MDSPACK library is used\citep{mdspack}\footnote{\url{https://mordigitalsystems.fr/en/index.html}}. For the implementation of  L-BFGS, the minFunc toolbox \citep{Schmidt_minFunc_2005} is used. We run the nonlinear optimization solver for up to 200 iterations. Increasing the number of  iterations could lead to a performance  improvement but we aim at  emphasizing the computational efficiency and keeping the overall framework fast. We denote the bilinear models  constructed by MII and MIIvp as MII$_{B}$ and MIIvp$_{B}$, respectively. The quadratic-bilinear models constructed by MII and MIIvp are denoted  as MII$_{QB}$ and MIIvp$_{QB}$, respectively. The error metrics that we consider are the Relative Frobenius Norm Error (RFNE) defined as 
\begin{equation} \label{eq:relFrob_error}
    \mathrm{RFNE}=\dfrac{\|Y-\hat{Y}\|_{F}} {\| Y\|_{F}},
\end{equation}
for the general case $n_y > 1$,  and the Root Mean Squared Error (RMSE)  defined as (for the case $n_y=1$):
\begin{equation}
\mathrm{RMSE}
= \sqrt{\frac{1}{m}\sum_{i=0}^{m-1}\left(y_i-\hat{y}_i\right)^2}.
\end{equation}

\subsection{Application on a simple model} \label{subsec:Toy_model}

For the first numerical experiment we consider  a bilinear model   with zero initial condition $x_0$. The model has state and output dimensions $n=2, n_y=100$, respectively, and it consists of the  following matrices

$$
A=
\begin{bmatrix}
0.7 & 0.1\\
-0.2 & 0.9
\end{bmatrix},
~~
B=
\begin{bmatrix}
0.5\\
0.1
\end{bmatrix},
~~ 
N=
\begin{bmatrix}
0.05 & -0.02\\
0.01 & 0.03
\end{bmatrix}
$$
$$
C_{ij}= \sin\!\left(\frac{j\,\pi\,(i-1)}{n_y-1}\right), 
~~ 
F_{ij}= 0.1\,\cos\!\left(\frac{j\,\pi\,(i-1)}{n_y-1}\right),
$$
%\end{align*}
%\hspace{0.1cm}

for $i=1,\ldots,n_y$ and $j=1,2$, and $D=0$. The matrices $C$ and $F$ are defined as trigonometric functions to promote a well-conditioned output equation.
The eigenvalues of the $A$ matrix  are  $\{ 0.8 \pm 0.1\imath\}$ and those of the  $N$ matrix are  $\{0.04 \pm 0.01\imath\}$. We consider the input 
\begin{equation} \label{eq:u_train}
    u_t=0.8\sin\!\bigl(2\pi\cdot 0.02\, t\bigr)+0.4\sin\!\bigl(2\pi\cdot 0.05\, t+0.3\bigr)
\end{equation}

for  $t=0,1,2,\ldots, m-1$ with
$m=800$   and time-step $dt=1$ sec and then we normalize it (for possibly better numerical robustness)  by  dividing by its standard deviation.

We apply the Pencil method from the collected input-output data   to construct a model of the same order $2$.  The resulting linear model has a relative Frobenius norm error  \eqref{eq:relFrob_error} of $28.55 \%$.  By applying MII (where Tikhonov regularization was used with penalty parameter $\mu=0.41$  chosen by the L-curve criterion \citep{Hansen1992})   the error  improves to $20.38\%$. We initialize the MIIvp by the model from MII  and the resulting relative error drops almost to machine precision, reaching $1.45 \times 10^{-9} \%$. Furthermore, MIIvp is able to identify the eigenvalues of the true underlying model, for both the $A$ and $N$ matrices, as shown in  Table \ref{tab:eigA_eigN}.
In this example, MIIvp converged after 190 iterations with a tolerance of $10^{-6}$  and an execution time of approximately $1.1$ sec. Further decreasing the tolerance would lead to even lower error at the cost of more iterations and execution time. Note that the dimension of the nonlinear optimization problem of  VarPro is $2(n^2+n)=12$,  whereas if we considered the case of  nonlinear optimization    over all parameters the dimension would be equal to  $512$. %$dim (\theta)=512$.

\begin{table}[hbt!]
    \centering
    \begin{tabular}{|c|c|c|}
        \hline
        Model & $eig(A)$ & $eig(N)$\\
        \hline
        True & $\{\,0.8  \pm 0.1\imath\}$ &$\{\,0.04  \pm 0.01\imath\}$\\
        \hline
        Linear & $\{\,0.8680  \pm 0.1681\imath\}$& -\\
        \hline
        MII$_{B}$  & $\{\,0.8681 \pm 0.1681\imath\}$ &$\{\,-4.92\times10^{-7},\;-1.84\times10^{-6}\,\}$\\
        \hline
        \textbf{MIIvp$_{B}$} &  $\mathbf{\{\,0.8  \pm 0.1\imath\}}$& $\mathbf{\{\,0.04  \pm 0.01\imath\}}$\\
        \hline
    \end{tabular}
    \caption{Eigenvalues of the $A$ matrix for the original, linear, MII and MIIvp models, and the eigenvalues of the $N$ matrix for the original, MII and MIIvp models.}
    \label{tab:eigA_eigN}
\end{table}

Next, we validate the models on a validation dataset with an input modified in frequency and amplitude  as
\begin{equation} \label{eq:u_val_sin}
    u^{\text{val}}_{t}= 2 \sin\!\bigl(2\pi\cdot {0.04}\, t\bigr)+0.2\sin\!\bigl(2\pi\cdot 0.08\, t+0.1\bigr),
\end{equation}

for $t=0,1,2,\cdots,m-1$ for $m=1000$. The results of all models in terms of RFNE for both training  and validation  are presented in Table \ref{tab:Val_train_sin}.
We notice that the result for MIIvp stays at the same error magnitude as the training one.

\begin{table}[hbt!]
    \centering
    \begin{tabular}{|c|c|c|}
    \hline
         Model& RFNE $(\%)$ (Training) &RFNE $(\%)$ (Validation) \\
         \hline
         Linear& $28.55$&$93.6$ \\
         \hline
         MII$_{B}$& $20.38$&$19.62$\\
         \hline
         \textbf{MIIvp$_{B}$}&$\mathbf{1.45 \times 10^{-9}}$ & $\mathbf{1.56 \times 10^{-9}}$ \\
         \hline
    \end{tabular}
    \caption{RFNE of different models for the training dataset and for the validation dataset with the input \eqref{eq:u_val_sin}.}
    \label{tab:Val_train_sin}
\end{table}

Finally, we conduct 1000 trials with a random input (with MATLAB's \emph{randn})   for $m=1000$ samples. The results for the relative Frobenius errors   are displayed in Table \ref{tab:Monte_carlo}.  Again, the order of the errors for  MIIvp stays at the same magnitude demonstrating that the method properly learned the underlying dynamics of the  model.

\begin{table}[!t]
\centering
%\resizebox{\columnwidth}{!}{%
\begin{tabular}{|c|c|c|c|}
  \hline
  Model & Mean (\%) & Min (\%) & Max (\%) \\
  \hline
  Linear  & $50.9$ & $45.83$ & $56.66$ \\
  \hline
  MII$_{B}$   & $67.49$ & $59.86$ & $75.59$ \\
  \hline
  \textbf{MIIvp$_{B}$} & $\mathbf{6.03 \times 10^{-9}}$ & $\mathbf{5.34 \times 10^{-9}}$ & $\mathbf{6.79 \times 10^{-9}}$ \\
  \hline
\end{tabular}%
%}
\caption{RFNE (\%) over 1000 random input trials.}
\label{tab:Monte_carlo}
\end{table}

\subsection{Nonlinear benchmarks}
Next, we apply MIIvp to two nonlinear  SISO benchmark datasets from \url{https://www.nonlinearbenchmark.org/benchmarks}.  We compare MIIvp with other methods reported on the aforementioned website. In each benchmark, the  comparison is made in terms of performance on unseen data  and not those that we use for training. The training and validation periods are based on the guidelines given for each benchmark on the aforementioned website.  For  comparison,  the suggested error metric is the  RMSE.

\subsubsection{Application on the Coupled Electric Drives (CED) benchmark} \label{subsec:CED}

This benchmark is called the Coupled Electric Drives (CED)  and can be found at \url{https://www.nonlinearbenchmark.org/benchmarks/coupled-electric-drives}. The CED benchmark describes a two-motor, belt-driven pulley system with lightly damped dynamics. It focuses on speed control, with angular speed measured by an unsigned sensor, while the short datasets make identification difficult. Two different experiments are considered: Test 1 $(u_1,y_1)$ and Test 2  $(u_2,y_2)$, with $m=500$ samples  for each input-output pair and  a $dt=0.02$ sec time-step. To validate the model's performance for each input-output pair, the first $400$  samples are considered for training and the rest $100$  for prediction. The order of the system is chosen to be $r=3$, as was chosen by other methods reported on the benchmark website. 
For MIIvp, better results for this benchmark are obtained by the quadratic-bilinear structure. Table \ref{tab:CED_results} presents the results of various methods for the unseen data of Test 1 and Test 2  where the MIIvp with the quadratic-bilinear structure is denoted as MIIvp$_{QB}$. In this benchmark, MIIvp has lower RMSE compared to all methods except for dynoNet \citep{Forgione2020dynoNet}. It also has significantly lower execution time compared to other methods in Table \ref{tab:CED_results}. The output comparison of the original data and the model constructed by MIIvp is displayed in Figure \ref{fig:CED_QB}. The vertical black dotted line denotes the end of training at $t=8 $  sec  ($m=400$ samples).

\begin{table}[hbt!]
    \centering
    \begin{tabular}{|c|c|c|}
    \hline
         Model& RMSE (Training)  & RMSE (Validation)\\
         \hline
         Linear& 0.5069  & 0.3704\\
         \hline
       MII$_{QB}$& $0.3363$ & $0.2859$\\
         \hline  
\textbf{MIIvp$_{QB}$}& $\mathbf{0.0494}$ &$\mathbf{0.0662}$\\
         \hline
    \end{tabular}
    \caption{Training and validation RMSE of the linear model, MII model and the one from MIIvp for the Test 1 data of the CED benchmark.}
    \label{tab:results_Y1}
\end{table}

\begin{table}[hbt!]
    \centering
    \begin{tabular}{|c|c|c|}
    \hline
         Model& RMSE (Training)  & RMSE (Validation)\\
         \hline
         Linear& 0.4856  & 0.3595\\
         \hline
       MII$_{QB}$& 0.3504 & 0.3328\\
         \hline  
\textbf{MIIvp$_{QB}$}& \textbf{0.0496} &\textbf{0.0506}\\
         \hline
    \end{tabular}
    \caption{Training and validation RMSE of the linear model, MII model and the one from MIIvp for the Test 2 data of the CED benchmark.}
    \label{tab:results_Y2}
\end{table}

\begin{table}[ht!]
\centering
\resizebox{\textwidth}{!}{
\begin{tabular}{|c|c|c|c|}
\hline
\textbf{Method} & \textbf{Test 1} & \textbf{Test 2} &\textbf{Time} \\
\hline
RBFNN - FSDE & 0.130 & 0.185& -\\
\hline
Cascaded Splines & 0.216 & 0.110&-\\
\hline
Extended Fuzzy Logic & 0.150 & 0.092& -\\
\hline
neural ODE & 0.362 & 0.370& -\\
\hline
Sparse Bayesian LSTM & 0.121 & 0.097& -\\
\hline
Sparse Bayesian MLP & 0.149 & 0.120& -\\
\hline
GP with squared exponential kernel & 0.153 & 0.132 &-\\
\hline
GP with rational quadratic kernel & 0.150 & 0.167& -\\
\hline
CT SUBNET & 0.143 & 0.100&- \\
\hline
neural ODE with normalization & 0.198 & 0.158& -\\
\hline
CT SUBNET neural state-space encoder & 0.115 & 0.074 &$1-60$ min\\
\hline
SUBNET neural state-space encoder & 0.169 & 0.117&$1-60$ min \\
\hline
dynoNet & 0.062 & 0.047&$1-60$ sec \\
\hline
ARX & 0.433 & 0.179& $ < 1$ sec\\
\hline
GPNARX & 0.0870 & 0.0696& $1-60$ min\\
\hline
GRU & 0.180 & 0.159& $1-60$ min\\
\hline
LSTM & 0.139 & 0.111& $1-60$ min\\
\hline
MLPFIR & 0.585 & 0.962 &$1-60$ sec\\
\hline
MLPNARX & 0.123 & 0.193&$1-60$ sec \\
\hline
OLSTM & 0.181 & 0.191& $1-60$ min\\
\hline
PNARX & 0.230 & 0.137& $ < 1$ sec\\
\hline
RNN & 0.114 & 0.138 &$1-60$ min\\
\hline
\textbf{MIIvp$_{QB}$ (this work)} & 0.0662 & 0.0506& $\approx 0.85$ sec  ($ < 1$ sec)\\
\hline
\end{tabular}}

\caption{Results for the CED benchmark for the test 1  and test 2 as  found at  \url{https://www.nonlinearbenchmark.org/benchmarks/coupled-electric-drives}. The (-) indicates that the execution time was not reported.}
\label{tab:CED_results}

\end{table}

\begin{figure*}[h!]
    %\centering
    \hspace*{-1.5cm}
    \includegraphics[width=1.2\linewidth]{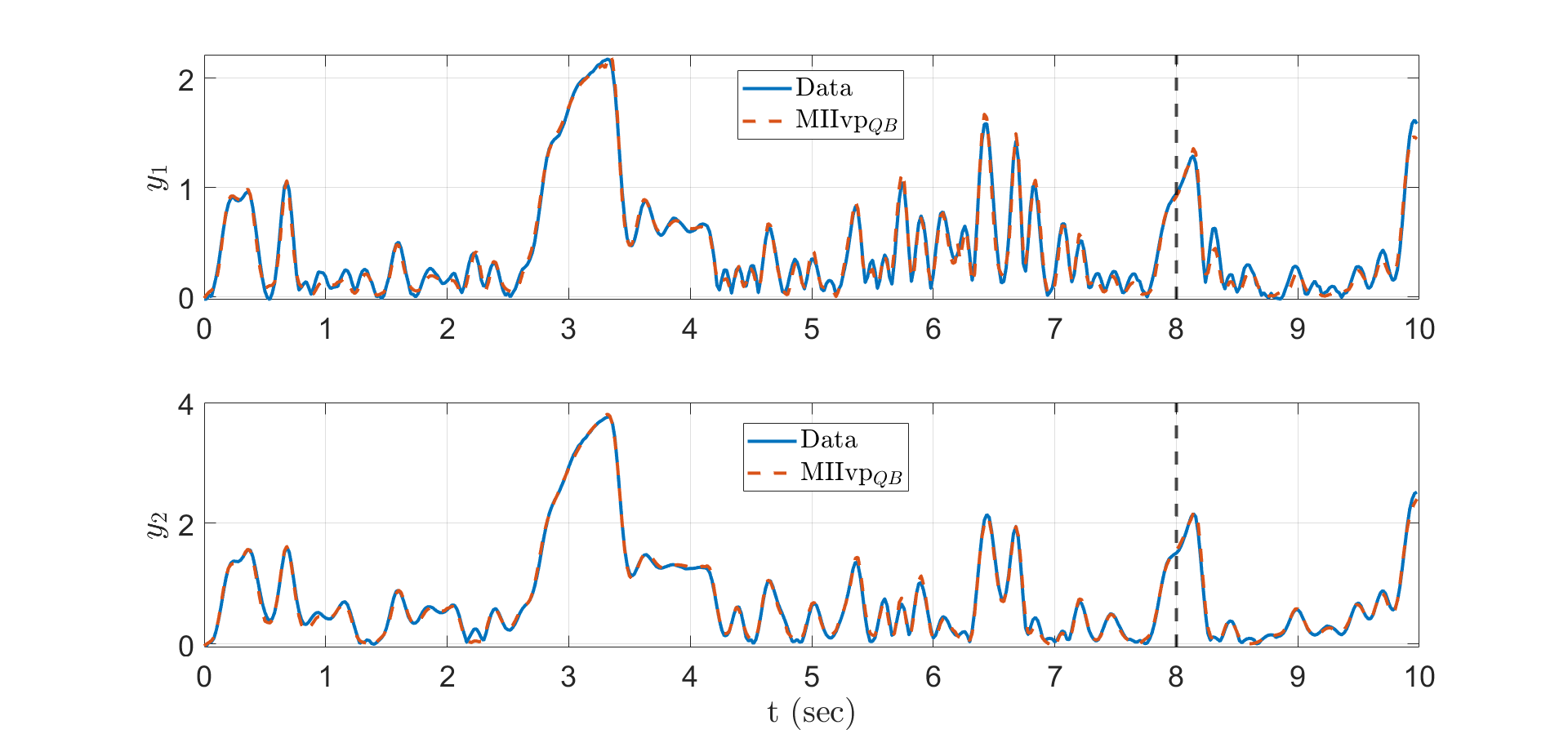}
    \caption{Comparison of the original (solid blue) and MIIvp$_{QB}$ (red dashed) outputs for Test 1 (upper panel) and Test 2 (lower panel) of the CED benchmark. The vertical black dotted line denotes the end of training at time $t=8$ sec.}
    \label{fig:CED_QB}
\end{figure*}

\subsubsection{Application on the Cascaded Tanks  (CT) benchmark}\label{subsec:CT}

The second benchmark is the Cascaded Tanks (CT) benchmark, available at \url{https://www.nonlinearbenchmark.org/benchmarks/cascaded-tanks}. The  CT benchmark is a fluid level control system composed of two serially connected tanks fed by a pump. It exhibits both soft and hard nonlinearities, including overflow effects, and the relatively short data records make system identification challenging.
The benchmark consists of two different experiments with $m=1024$ samples and time-step $dt=4$ sec. The first input-output pair is used for training and the second input-output pair is used for  validation.
Again, the choice of order is $r=3$, as for the PNLSS and NLSS methods in \cite{RELAN2017_CTbenchmark}  and as before, the quadratic-bilinear structure   is chosen. Results of methods, in terms of performance for the validation data, are presented in Table \ref{tab:CT_results_methods}. The comparison of the validation output and the one from MIIvp is shown in Figure \ref{fig:Ct_QB}.  For this benchmark, MIIvp produced a higher RMSE  for the validation dataset. In terms of the error metrics, it outperformed 7 out of the 24 reported methods. These results may indicate that quadratic and bilinear terms are not sufficient to model the nonlinear behavior of this physical process as methods that use higher polynomial terms (such as PNLSS) and especially neural networks (such as NLSS, dynoNet, SUBNET)  seem to be more appropriate for this benchmark. Since, for  the estimation and validation the same initial condition for the state was used, it could be the case that MIIvp underestimated it as Figure \ref{fig:Ct_QB} depicts. However, MIIvp still produced a stable (for the operating regimes) reduced-order model relatively fast ($\approx 1.8$ sec) and with a structure that is easy to interpret.   
\begin{table}[ht!]
    \centering
    \begin{tabular}{|c|c|c|}
    \hline
         Model& RMSE (Training)  & RMSE (Validation)\\
         \hline
         Linear& 1.16  & 1.4784\\
         \hline
       MII$_{QB}$& 0.6406 & 1.0562\\
         \hline  
\textbf{MIIvp$_{QB}$}& \textbf{0.1852} &\textbf{0.5121}\\
         \hline
\end{tabular}
    \caption{Training and validation RMSE of the linear model, MII model and  MIIvp model for the dataset  of the CT benchmark.}
    \label{tab:results_CT_mii}
\end{table}

\begin{table}[ht!]
\centering
\resizebox{\textwidth}{!}{
\begin{tabular}{|c|c|c|}
\hline
\textbf{Method} & \textbf{RMSE (Val)}& \textbf{Time} \\
\hline
Grey-Box with physical overflow model & 0.18&- \\
\hline
State-space with GP-inspired prior & 0.45 &-\\
\hline
NL-SS + NLSS2 & 0.34 &-\\
\hline
Polynomial Nonlinear State-Space & 0.45&- \\
\hline
Volterra & 0.53 &-\\
\hline
Tensor B-splines & 0.30& -\\
\hline
Truncated Simulation Error Minimization & 0.33& -\\
\hline
Soft-Constrained Integration & 0.40 &-\\
\hline
IO stable CT ANN & 0.39 &-\\
\hline
neural ODE with normalization & 0.33&- \\
\hline
CT SUBNET & 0.30 &-\\
\hline
CT SUBNET neural state-space encoder & 0.306&$1-60$ min \\
\hline
SUBNET neural state-space encoder & 0.37 &$1-60$ min\\
\hline
dynoNet & 0.421 &$1-60$ sec\\
\hline
ARX & 0.697 &$1-60$ sec\\
\hline
GPNARX & 0.54& $1-60$ min\\
\hline
GRU & 0.568 &$1-60$ min\\
\hline
LSTM & 0.452 &$1-60$ min\\
\hline
MLPFIR & 1.24 &$1-60$ min\\
\hline
MLPNARX & 1.29 &$1-60$ sec\\
\hline
OLSTM & 0.427 &$1-60$ min\\
\hline
PNARX & 0.416 &$1-60$ sec\\
\hline
RNN & 0.97 &$1-60$ min\\
\hline
Normalized Gray-box State Space Neural Networks & 0.221 &$1-60$ min\\
\hline
%MIIvp (this work) & 0.413 \\
\textbf{MIIvp$_{QB}$ (this work)} & 0.5121& $\approx 1.8$ sec($1-60$ sec) \\
\hline
\end{tabular}}
\caption{Results for the CT benchmark validation  set as  found at \url{https://www.nonlinearbenchmark.org/benchmarks/cascaded-tanks}. The (-) indicates that the execution time was not reported.}
\label{tab:CT_results_methods}
\end{table}
\begin{figure*}[ht!]
        %\centering
\hspace*{-1.7cm}
\includegraphics[width=1.2\linewidth]{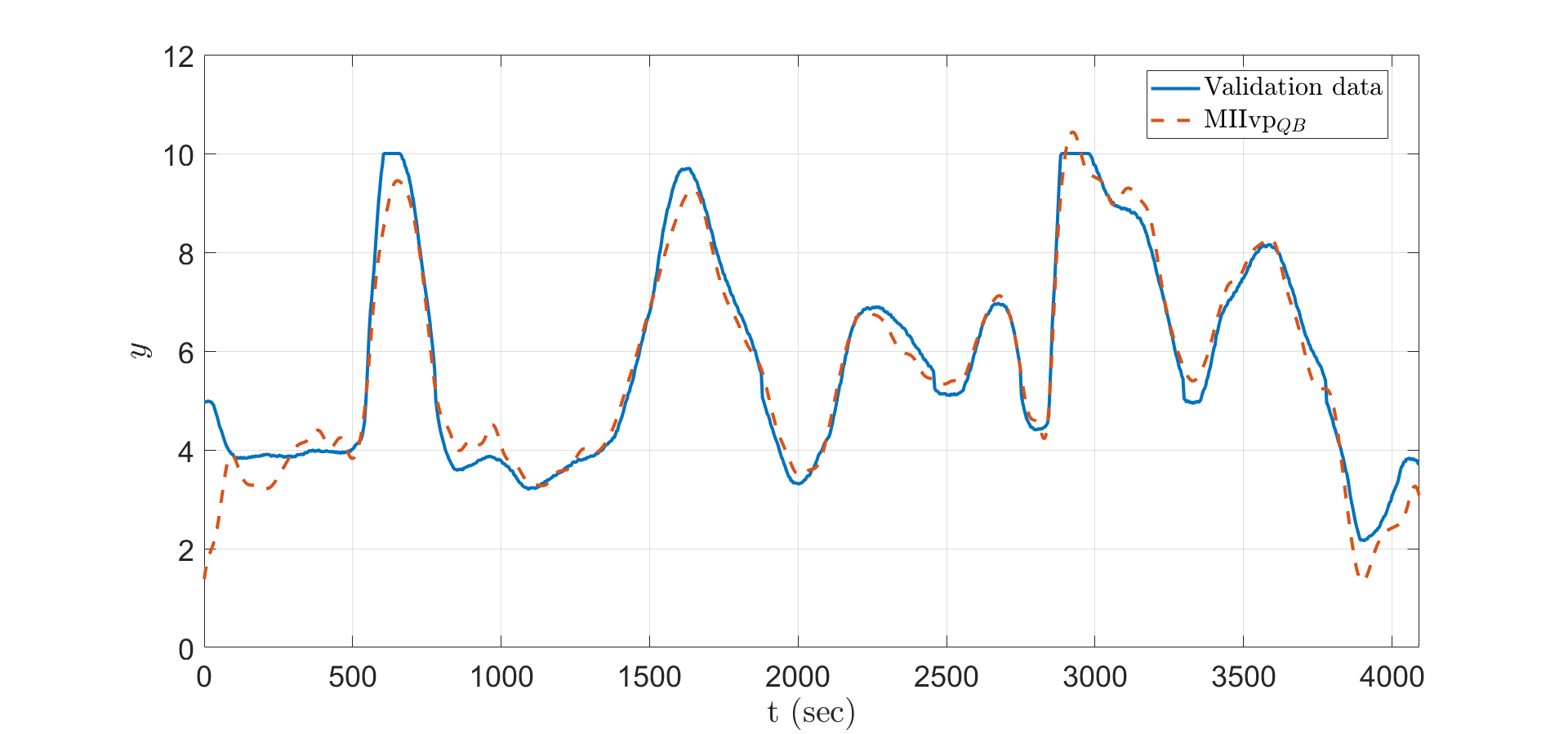}
        \caption{Comparison of the validation output series against the output of the quadratic-bilinear model constructed from MIIvp.}
        \label{fig:Ct_QB}
    \end{figure*}

\section{Conclusion} \label{sec: Conclusion}
In this work, a method capable of constructing nonlinear models of the form \eqref{struct: General_nonlinear} from input-output data in the time-domain is introduced. To  capture the nonlinear behavior of the data, the VarPro algorithm is used. The synthetic bilinear example demonstrated that MIIvp was able to recover the true model. For the more challenging nonlinear benchmarks, MIIvp achieved second-lowest reported validation error on both tests of the CED benchmark. For the CT benchmark, the suggested quadratic-bilinear structure seemed inadequate to model the nonlinearities compared to other choices such as different types of neural networks. Among the advantages of MIIvp is its flexibility since any nonlinear function may be considered, as long as it is linear in the parameters. A disadvantage is that the stability of the constructed model cannot be guaranteed. This is a common issue for many non-intrusive methods and a possible way to tackle this is  to add constraints on the nonlinear optimization  problem and solve it with L-BFGS with box constraints (L-BFGS-B) as in \cite{Bemporad2025}. Moreover, different nonlinear solvers may be considered, which could possibly further improve the method's performance depending on the nature of the problem. Another important topic to investigate is the effect of noise on the proposed method.  Future work will aim to circumvent these issues and try to improve  the optimization landscape.  Finally, the results obtained for a complex case of air pollution simulations  (where the output dimension is very large (i.e. $n_y > 10^5$) will be included  in a future study.

\bibliographystyle{apalike}
\bibliography{cas-refs}

@article{AGPV25,
    title={{The Loewner framework for parametric systems: Taming the curse of dimensionality}}, 
    author={A.~C. Antoulas and I~-V. Gosea and C. Poussot-Vassal},
    year={2025},
    Journal={SIAM Review},
    Volume = {67},
    Number = {4},
    Pages = {737-770},
    note ={\url{https://doi.org/10.1137/24M1656657}}, 
}

@misc{mdspack,
	author = {{MOR Digital Systems}},
  	title = {{MDSPACK (v1.1.0)}},
  	note = {Main page (\url{https://mordigitalsystems.fr}) \& Documentation (\url{https://mordigitalsystems.fr/static/mdspack_html/MDSpack-guide.html})},
  	year = {2025},
}

@article{Golub1973,
author = {Golub, G. H. and Pereyra, V.},
title = {The Differentiation of Pseudo-Inverses and Nonlinear Least Squares Problems Whose Variables Separate},
journal = {SIAM Journal on Numerical Analysis},
volume = {10},
number = {2},
pages = {413-432},
year = {1973},
doi = {10.1137/0710036},
}

@article{MixedII,
  title={Mixed interpolatory and inference non-intrusive reduced order modeling with application to pollutants dispersion},
  author={Charles Poussot-Vassal and Tiphaine Sabatier and Claire Sarrat and Pierre Vuillemin},
  journal={arXiv:2012.07126},
  year={2020},
  }

@article{Hansen1992,
  author  = {Hansen, Per Christian},
  title   = {Analysis of Discrete Ill-Posed Problems by Means of the {L}-Curve},
  journal = {SIAM Review},
  volume  = {34},
  number  = {4},
  pages   = {561--580},
  year    = {1992}
}

@article{RELAN2017_CTbenchmark,
title = {An Unstructured Flexible Nonlinear Model for the Cascaded Water-tanks Benchmark},
journal = {IFAC-PapersOnLine},
volume = {50},
number = {1},
pages = {452-457},
year = {2017},
note = {20th IFAC World Congress},
issn = {2405-8963},
doi = {https://doi.org/10.1016/j.ifacol.2017.08.074},
url = {https://www.sciencedirect.com/science/article/pii/S2405896317300940},
author = {Rishi Relan and Koen Tiels and Anna Marconato and Johan Schoukens},
}

@article{CUR_ERA,
author = {Kramer, Boris and Gorodetsky, Alex A.},
title = {System Identification via CUR-Factored Hankel Approximation},
journal = {SIAM Journal on Scientific Computing},
volume = {40},
number = {2},
pages = {A848-A866},
year = {2018},
doi = {10.1137/17M1137632},
URL = { https://doi.org/10.1137/17M1137632},
}

@article{R-svd,
author = {Halko, N. and Martinsson, P. G. and Tropp, J. A.},
title = {Finding Structure with Randomness: Probabilistic Algorithms for Constructing Approximate Matrix Decompositions},
journal = {SIAM Review},
volume = {53},
number = {2},
pages = {217-288},
year = {2011},
doi = {10.1137/090771806},
}

@book{Golub_Matrix_computations,
author = {Golub, Gene H. and Van Loan, Charles F.},
title = {Matrix Computations - 4th Edition},
publisher = {Johns Hopkins University Press},
year = {2013},
doi = {10.1137/1.9781421407944},
address = {Philadelphia, PA},
edition   = {},
URL = {https://epubs.siam.org/doi/abs/10.1137/1.9781421407944},
eprint = {https://epubs.siam.org/doi/pdf/10.1137/1.9781421407944}
}

@misc{Schmidt_minFunc_2005,
  author       = {Schmidt, Mark},
  title        = {minFunc: Unconstrained Differentiable Multivariate Optimization in Matlab},
  year         = {2005},
  note         = {Matlab software package},
  howpublished = {Available at: \url{https://www.cs.ubc.ca/~schmidtm/Software/minFunc.html}},
  url          = {https://www.cs.ubc.ca/~schmidtm/Software/minFunc.html},
  
}

@article{ERA1985,
author = {Juang, Joe and Pappa, Richard},
year = {1985},
month = {11},
pages = {},
title = {An Eigensystem Realization Algorithm for Modal Parameter Identification and Model Reduction},
volume = {8},
journal = {Journal of Guidance Control and Dynamics},
doi = {10.2514/3.20031},
}

@INPROCEEDINGS{Enforcing_stability,
  author={Gosea, Ion Victor and Poussot-Vassal, Charles and Antoulas, Athanasios C.},
  booktitle={2021 29th Mediterranean Conference on Control and Automation (MED)}, 
  title={On enforcing stability for data-driven reduced-order models}, 
  year={2021},
  volume={},
  number={},
  pages={487-493},
  doi={10.1109/MED51440.2021.9480216}}

@book{Pintelon_Schoukens_book2012,
title = "System Identification: A frequency Domain Approach",
author = "Rik Pintelon and Joannes Schoukens",
note = "IEEE Press, Wiley",
year = "2012",
month = apr,
day = "1",
language = "English",
isbn = "978-0-470-64037-1",
publisher = "Wiley / IEEE Press",
edition = "2",
}

@article{Loewner,
    author = {A. J. Mayo and A. C. Antoulas},
    title = {A framework for the solution of the generalized realization problem},
    journal = {Model Reduction of Complex Dynamical Systems. International Series of Numerical Mathematics, vol 171.},
    year = {2007}
}

@inproceedings{MyCDC,
  author={Xylogiannis, Dimitrios and Poussot-Vassal, Charles and Sarrat, Claire},
  booktitle={2024 IEEE 63rd Conference on Decision and Control (CDC)}, 
  title={Mixed Interpolatory Inference for reduced order bilinear model construction}, 
  year={2024},
  volume={},
  number={},
  pages={8028-8033},
  doi={10.1109/CDC56724.2024.10886585}}

@inproceedings{MyECC,
  author={Xylogiannis, Dimitrios and Poussot-Vassal, Charles and Sarrat, Claire},
  booktitle={2025 European Control Conference (ECC)}, 
  title={Construction of bilinear systems with nuclear norm regularization}, 
  year={2025},
  volume={},
  number={},
  pages={2703-2708},
  doi={10.23919/ECC65951.2025.11187103}}

@book{benner2021ROM_vol1,
  title     = {Model Order Reduction. Volume 1: System- and Data-Driven Methods and Algorithms},
  editor    = {Benner, Peter and Grivet-Talocia, Stefano and Quarteroni, Alfio and Rozza, Gianluigi and Schilders, Wil and Silveira, Lu{\'i}s Miguel},
  year      = {2021},
  publisher = {De Gruyter},
  address   = {Berlin, Boston},
  isbn      = {978-3-11-050043-1},
  doi       = {10.1515/9783110498967}
}

@Article{Sarrat2017airport,
AUTHOR = {Sarrat, Claire and Aubry, Sébastien and Chaboud, Thomas and Lac, Christine},
TITLE = {Modelling Airport Pollutants Dispersion at High Resolution},
JOURNAL = {Aerospace},
VOLUME = {4},
YEAR = {2017},
NUMBER = {3},
ARTICLE-NUMBER = {46},
URL = {https://www.mdpi.com/2226-4310/4/3/46},
ISSN = {2226-4310},
DOI = {10.3390/aerospace4030046}
}

@book{antoulas2005approximation,
  author    = {Antoulas, Athanasios C.},
  title     = {Approximation of Large-Scale Dynamical Systems},
  year      = {2005},
  publisher = {Society for Industrial and Applied Mathematics},
  address   = {Philadelphia, PA},
  series    = {Advances in Design and Control},
  volume    = {6},
  doi       = {10.1137/1.9780898718713},
  isbn      = {978-0-89871-529-3}
}

@article{GoseaDMD,
  author    = {I.V.  Gosea and I. Pontes-Duff},
  journal   = {arXiv:2003.06484},
  title     = {Toward fitting structured nonlinear systems by means of dynamic mode decomposition},
  year      = {2020},
  month     = {March},
  volume    = {},
  pages     = {1-16},
  number    = {},
}

@book{AntoulasBook:2020,
	Author = {A.C. Antoulas and C.A. Beattie and S. Gugercin},
	Title = {{Interpolatory methods for model reduction}},
	Address = {Philadelphia},
	Publisher = {SIAM Computational Science and Engineering},
	Year = {2020}}

@ARTICLE{Decuyper_decouple,
  author={Decuyper, Jan and Dreesen, Philippe and Schoukens, Johan and Runacres, Mark C. and Tiels, Koen},
  journal={IEEE Control Systems Letters}, 
  title={Decoupling Multivariate Polynomials for Nonlinear State-Space Models}, 
  year={2019},
  volume={3},
  number={3},
  pages={745-750},
  doi={10.1109/LCSYS.2019.2916955},
  }

@article{OptDMD2018,
author = {Askham, Travis and Kutz, J. Nathan},
title = {Variable Projection Methods for an Optimized Dynamic Mode Decomposition},
journal = {SIAM Journal on Applied Dynamical Systems},
volume = {17},
number = {1},
pages = {380-416},
year = {2018},
doi = {10.1137/M1124176},
url = {https://doi.org/10.1137/M1124176},
}

@article{BRULS1999,
title = {Linear and Non-linear System Identification Using Separable Least-Squares},
journal = {European Journal of Control},
volume = {5},
number = {1},
pages = {116-128},
year = {1999},
issn = {0947-3580},
doi = {https://doi.org/10.1016/S0947-3580(99)70146-9},
url = {https://www.sciencedirect.com/science/article/pii/S0947358099701469},
author = {J. Bruls and C.T. Chou and B.R.J. Haverkamp and M. Verhaegen},
}

@ARTICLE{GanMin2021,
  author={Gan, Min and Guan, Yu and Chen, Guang-Yong and Chen, C. L. Philip},
  journal={IEEE Transactions on Neural Networks and Learning Systems}, 
  title={Recursive Variable Projection Algorithm for a Class of Separable Nonlinear Models}, 
  year={2021},
  volume={32},
  number={11},
  pages={4971-4982},
  doi={10.1109/TNNLS.2020.3026482},}

@article{Liu1989,
  title={On the limited memory BFGS method for large scale optimization},
  author={Dong C. Liu and Jorge Nocedal},
  journal={Mathematical Programming},
  year={1989},
  volume={45},
  pages={503-528},
  url={https://api.semanticscholar.org/CorpusID:5681609}
}

@article{Bemporad2025,
author={Bemporad, Alberto},
  journal={IEEE Transactions on Automatic Control}, 
  title={An {L-BFGS-B} Approach for Linear and Nonlinear System Identification Under $\ell _{1}$ and Group-Lasso Regularization}, 
  year={2025},
  volume={70},
  number={7},
  pages={4857-4864},
  doi={10.1109/TAC.2025.3541018}}

@InProceedings{Beintema21a,
  title = 	 {Nonlinear state-space identification using deep encoder networks},
  author =       {Beintema, Gerben and Toth, Roland and Schoukens, Maarten},
  booktitle = 	 {Proceedings of the 3rd Conference on Learning for Dynamics and Control},
  pages = 	 {241--250},
  year = 	 {2021},
  volume = 	 {144},
  series = 	 {Proceedings of Machine Learning Research},
  month = 	 {07 -- 08 June},
  publisher =    {PMLR},
  url = 	 {https://proceedings.mlr.press/v144/beintema21a.html},
  }

@article{BEINTEMA2021_video,
title = {Non-linear State-space Model Identification from Video Data using Deep Encoders},
journal = {IFAC-PapersOnLine},
volume = {54},
number = {7},
pages = {697-701},
year = {2021},
note = {19th IFAC Symposium on System Identification SYSID 2021},
issn = {2405-8963},
doi = {https://doi.org/10.1016/j.ifacol.2021.08.442},
url = {https://www.sciencedirect.com/science/article/pii/S2405896321012167},
author = {Gerben I. Beintema and Roland Toth and Maarten Schoukens},
}

@article{WAHLSTROM2015,
title = {Learning deep dynamical models from image pixels},
journal = {IFAC-PapersOnLine},
volume = {48},
number = {28},
pages = {1059-1064},
year = {2015},
note = {17th IFAC Symposium on System Identification SYSID 2015},
issn = {2405-8963},
doi = {https://doi.org/10.1016/j.ifacol.2015.12.271},
url = {https://www.sciencedirect.com/science/article/pii/S2405896315028955},
author = {Niklas Wahlström and Thomas B Schön and Marc Peter Deisenroth},
}

@article{FORGIONE2021,
title = {Continuous-time system identification with neural networks: Model structures and fitting criteria},
journal = {European Journal of Control},
volume = {59},
pages = {69-81},
year = {2021},
issn = {0947-3580},
doi = {https://doi.org/10.1016/j.ejcon.2021.01.008},
url = {https://www.sciencedirect.com/science/article/pii/S0947358021000169},
author = {Marco Forgione and Dario Piga},
}

@article{Forgione2020dynoNet,
  title={dynoNet: A neural network architecture for learning dynamical systems},
  author={Marco Forgione and Dario Piga},
  journal={International Journal of Adaptive Control and Signal Processing},
  year={2020},
  volume={35},
  pages={612 - 626},
  url={https://api.semanticscholar.org/CorpusID:219260808}
}

@article{BEINTEMA2023,
title = {Deep subspace encoders for nonlinear system identification},
journal = {Automatica},
volume = {156},
pages = {111210},
year = {2023},
issn = {0005-1098},
doi = {https://doi.org/10.1016/j.automatica.2023.111210},
url = {https://www.sciencedirect.com/science/article/pii/S0005109823003710},
author = {Gerben I. Beintema and Maarten Schoukens and Roland Tóth},
}

@article{Verdult2001,
title = "Identification of multivariable bilinear state space systems based on subspace techniques and separable least squares optimization",
author = "V. Verdult and M. Verhaegen",
year = "2001",
language = "Undefined/Unknown",
volume = "74",
pages = "1824--1836",
journal = "International Journal of Control",
issn = "0020-7179",
publisher = "Taylor & Francis",
number = "18",
}

@article{Ho_Kalman1966,
    author ={B.L Ho and R.E. Kalman} ,
    title = {Effective construction of linear state variable models from input-output functions},
    journal = {Automatisierungstechnik},
    doi = {doi:10.1524/auto.1966.14.112.545},
    year = {1966}
}

@article{Marconato2014ImprovedIF,
  title={Improved Initialization for Nonlinear State-Space Modeling},
  author={Anna Marconato and Jonas Sj{\"o}berg and Johan A. K. Suykens and Johan Schoukens},
  journal={IEEE Transactions on Instrumentation and Measurement},
  year={2014},
  volume={63},
  pages={972-980},
  url={https://api.semanticscholar.org/CorpusID:825462}
}

@article{PADUART2010_pnlss,
title = {Identification of nonlinear systems using Polynomial Nonlinear State Space models},
journal = {Automatica},
volume = {46},
number = {4},
pages = {647-656},
year = {2010},
issn = {0005-1098},
doi = {https://doi.org/10.1016/j.automatica.2010.01.001},
url = {https://www.sciencedirect.com/science/article/pii/S000510981000021X},
author = {Johan Paduart and Lieve Lauwers and Jan Swevers and Kris Smolders and Johan Schoukens and Rik Pintelon},
}

@article{Schoukens:2016,
	Author = {J. Schoukens and  M. Vaes and R. Pintelon},
	Issue = {},
	Month = {June},
	Journal = {IEEE Control Systems Magazine},
	Pages = {38-88},
	Title = {{Linear System Identification in a Nonlinear Setting}},
	Volume = {},
	Year = {2016},
}

@inbook{AntoulasSurvey:2016,
	Author = {A.C. Antoulas and S. Lefteriu and A.C. Ionita},
	Chapter = {{A tutorial introduction to the Loewner framework for model reduction}},
	Publisher = {P. Benner, A. Cohen, M. Ohlberger and K. Willcox Eds},
	Series = {SIAM, Philadelphia},
	Title = {{Model reduction and approximation theory and algorithms}},
	Year = {2016}}

@article{IonitaPencil:2012,
   	author    = {A.C. Ionita and A.C. Antoulas},
	
	title   = {{Chap.9: Matrix pencils in time and frequency domain system identification}},
	Journal = {Developments in Control Theory Towards Glocal Control, Control, Robotics and Sensors, Institution of Engineering and Technology},
	year      = {2012},
	pages     = {79–88},
}

\end{document}